\documentclass[journal]{IEEEtran}
\usepackage[T1]{fontenc}
\usepackage[utf8]{inputenc}
\usepackage{algorithm} 
\usepackage{amsmath,amsfonts}
\usepackage{amssymb, amsthm }
\usepackage{algorithmicx} 
\usepackage{algpseudocode} 
\usepackage{array}
\usepackage{cancel}
\usepackage{subcaption} 
\usepackage{textcomp}
\usepackage{stfloats}
\usepackage{url}
\usepackage{verbatim}
\usepackage{graphicx}
\usepackage{cite}
\usepackage{braket}
\usepackage{mathrsfs}
\usepackage{multirow}
\usepackage{hhline}
\usepackage{makecell}
\usepackage[normalem]{ulem}
\usepackage{setspace}
\usepackage{color}
\usepackage{xcolor}
\usepackage{tikz}
\usetikzlibrary{shapes.geometric, arrows, arrows.meta, positioning, fit, calc}

\pgfdeclarelayer{background}
\pgfsetlayers{background,main}

\definecolor{myred}{RGB}{180,0,0}

\newtheorem{theorem}{Theorem}
\newtheorem{lemma}{Lemma}
\newtheorem{proposition}{Proposition}

\algrenewcommand\algorithmicrequire{\textbf{Input:}}
\algrenewcommand\algorithmicensure {\textbf{Output:}}

\begin{document}

\title{Binary Signal Recovery in Undersampling: Iterative SDP with Majority Voting and Successive Interference Cancellation}

\author{Ece Abay, Burhan Gulbahar, and Fatih Alag\"oz
\thanks{
Burhan Gulbahar is with the Department of Electrical and Electronics Engineering, 
Ya{\c{s}}ar University, 35100 Izmir, T{\"u}rkiye 
(e-mail: burhan.gulbahar@yasar.edu.tr).
Ece Abay was with the Department of Electrical and Electronics Engineering, 
Ya{\c{s}}ar University, 35100 Izmir, T{\"u}rkiye, during the course of this work
(e-mail: ecekose97@gmail.com).
Fatih Alag\"oz is with the Department of Computer Engineering, Bo{\u{g}}azi{\c{c}}i University, 34342 {\.I}stanbul, T{\"u}rkiye
(e-mail: fatih.alagoz@bogazici.edu.tr).
}
\thanks{This work has been submitted to the IEEE for possible publication. Copyright may be transferred without notice, after which this version may no longer be accessible.}
}

\maketitle

\begin{abstract}

Binary compressive sensing (BCS) seeks to recover a $k$-sparse binary vector of length $n$ from $m$ linear measurements. Classical CS guarantees break down  for $m < k$ and convex/greedy BCS algorithms with random Gaussian sensing matrices perform poorly.
We introduce ISDP-MVSIC, which combines randomized semidefinite programming (SDP) sampling, majority voting (MV) and successive interference cancellation (SIC) across $L \ll n$ stages, wrapped in a residual-cost driven retry loop.
The method exposes a tunable complexity--performance trade-off: for $n=100, 144$, raising the worst-case complexity $\mathcal{C}_{max}$ from $7.9 \times 10^9$ to $2.0 \times 10^{10}$ enables empirical exact recovery over $m/k \in [0.4,5.0]$ as the sparsity ratio $s=k/n$ decreases from $0.5$ to $0.1$, by practically targeting the undersampled regime.
\end{abstract}
\begin{IEEEkeywords}
binary compressed sensing, SDP, majority voting, successive interference cancellation
\end{IEEEkeywords}

\section{Introduction}

\IEEEPARstart{B}{inary} compressive sensing (BCS) recovers a sparse $n$-bit binary signal $\mathbf{x}$ with $k$ nonzero elements from measurements $\mathbf{y} = \mathbf{H} \, \mathbf{x}$ where $\mathbf{H} \in \mathbb{R}^{m \times n}$ \cite{donoho2006compressed, candes2006robust, stojnic2010recovery, nakarmi2012bcs}.
For i.i.d. Gaussian $\mathbf{H}$, unique recovery of any $k$-sparse vector requires $m \geq 2k$ measurements \cite{eldar2015compressed}, and more generally $m  = \Omega \big(\delta^{-2}\,k\log(n/k) \big) $ where $\delta\in(0,1)$ is the restricted-isometry constant.
However, many applications require signal recovery in the undersampled and challenging regime $m \leq k$, where CS guarantees break down, including overloaded MIMO \cite{hayakawa2017convex, hajji2020iterative} and underwater sensor networks \cite{wang2019underwater, wu2018compressive}.
Conventional BCS methods offer limited performance for $m \leq k$ \cite{stojnic2010recovery, mangasarian2011probability, wang2013binary, sparrer2016mmse, Fosson2018, fosson2019recovery, hayakawa2020asymptotic, doi2024phase}.

In this article, we propose a flexible algorithm, ISDP-MVSIC targeting the regime $m \leq k$ under i.i.d. Gaussian $\mathbf{H}$ via a tunable complexity--performance trade-off, combining iterative semidefinite programming (SDP), majority voting (MV), and successive interference cancellation (SIC) \cite{wolniansky1998v}.
It randomizes measurements and iteratively solves SDP relaxations to generate candidates via multivariate normal sampling; candidates are filtered by adaptive sparsity and residual cost $C(\widehat{\mathbf{x}})=\|\mathbf{y}-\mathbf{H}\,\widehat{\mathbf{x}}\|_{2}^{2}$, MV identifies reliable bits for decoding, SIC removes their interference, and a residual-driven retry loop wraps the pipeline.
The approach is motivated by \cite{gulbahar2025majority}, which combines cost-minimization sampling, MV, and SIC for MIMO detection. We adapt this machinery to BCS through binary-to-bipolar reduction \eqref{equivalentLS}, a sparsity-aware sample filter, a residual-driven retry loop  and finite-size analysis. 

\subsection{Related Works}

Standard $\ell_1$-norm and mixed-norm methods \cite{stojnic2010recovery, mangasarian2011probability, wang2013binary, hayakawa2020asymptotic, doi2024phase}, greedy algorithms like MMSE-OMP \cite{sparrer2016mmse} and PROMP \cite{flinth2018asparse}, and finite-alphabet approaches \cite{bioglio2014sparse, aissaelbey2015sparsity, hayakawa2017convex} typically degrade for $m \le k$.
Box-SOAV \cite{hayakawa2020asymptotic} extends convex relaxation with asymptotic bounds.
RWR \cite{Fosson2018} and POP \cite{fosson2019recovery} recover only for $m \gtrsim 1.7k$ and $0.9k$.
Annealing-based heuristics on a quadratic unconstrained binary optimization (QUBO) reformulation of \eqref{MLproblem}, including simulated and quantum annealing (QA) \cite{ayanzadeh2019quantum, doi2024phase}, are sensitive to penalty-parameter choice and solver hardware.
Complementary research designs the sensing matrix \cite{nakarmi2012bcs,shirvanimoghaddam2015binary,sarangi2022measurement, romanov2021unsourced}.
These methods differ in their priors: convex and finite-alphabet approaches \cite{wang2013binary, hayakawa2020asymptotic, hayakawa2017convex} exploit no explicit sparsity and target $m\gtrsim k$, whereas sparsity-aware methods \cite{sparrer2016mmse, Fosson2018, fosson2019recovery} require $m\gtrsim0.9$--$1.7k$; ISDP-MVSIC exploits sparsity through its $(k-k_{dec})$-budgeted filter while sustaining $m<k$ (down to $m/k=0.4$). The prior work does not address a stagewise cost/margin model.
Table~\ref{tab:related} positions ISDP-MVSIC against representative estimators. 

\begin{table}[t!]
\caption{Representative BCS estimators on i.i.d. Gaussian $\mathbf{H}$ in the underdetermined regime $m<n$.}
\label{tab:related}
\setlength\tabcolsep{2.5pt}
\renewcommand{\arraystretch}{1.15}
\scriptsize
\centering
\begin{tabular}{|>{\raggedright\arraybackslash}p{2.0cm}|>{\raggedright\arraybackslash}p{1.55cm}|c|>{\raggedright\arraybackslash}p{1.6cm}|c|}
\hline
\textbf{Method (Ref.)} & \textbf{Approach} & \textbf{Year} & \textbf{Tested regime} & \textbf{Complexity} \\
\hline
Weak-$\ell_1$ threshold \cite{stojnic2010recovery} & LP, theoretical PT & 2010 & $m/n \gtrsim 0.5$, $s \approx 0.1$ & $\mathcal{O}(n^{3.5})$ \\
\hline
$\ell_1 + \ell_\infty$ \cite{wang2013binary} & Mixed-norm convex & 2013 & $m > k$ & $\mathcal{O}(n^{3.5})$ \\
\hline
MMSE-OMP \cite{sparrer2016mmse} & Greedy + MMSE & 2016 & $m > k$ & $\mathcal{O}(m n k)$ \\
\hline
RWR \cite{Fosson2018} & Iter. reweighted & 2018 & $m \gtrsim 1.7k$ & $\mathcal{O}(T\,n^{3.5})$ \\
\hline
POP \cite{fosson2019recovery} & Lasserre SDP & 2019 & $m \gtrsim 0.9k$ & $\mathcal{O}(n^{3.5})$ \\
\hline
Box-SOAV \cite{hayakawa2020asymptotic} & Sum-$\ell_1$ + box, DR & 2020 & $m < n$ & $\mathcal{O}(T\,n^{2})$ \\
\hline
QUBO + QA \cite{ayanzadeh2019quantum} & Ising search & 2019 & formulation only & solver dependent \\
\hline
\textbf{ISDP-MVSIC} & Iter. SDP + MV + SIC & 2025 & $m/k \in [0.4,5.0]$, $n \le 144$ & $\mathcal{O}(n^{3.5}\sum_i r_i)$\\
\hline
\end{tabular}
\end{table}

\subsection{Contributions}
 
The novel contributions are listed as follows:
\begin{itemize}\setlength{\itemsep}{0pt}\setlength{\parskip}{0pt}\setlength{\topsep}{1pt}
\item A BCS recovery algorithm combining iterative SDP sampling, MV, SIC, and residual-driven retry, targeting $m \leq k$ under i.i.d. Gaussian $\mathbf{H}$.
\item A tunable complexity--performance trade-off via sample iteration count $r_i$ per stage.
\item Empirical comparison with Box-SOAV \cite{hayakawa2020asymptotic}, RWR \cite{Fosson2018}, MMSE-OMP \cite{sparrer2016mmse}, and theoretical thresholds \cite{stojnic2010recovery, fosson2019recovery}. For $n = 100, 144$, worst-case complexity $\mathcal{C}_{max} \in [7.9\times10^9, 2.0\times10^{10}]$ enables exact recovery for $m/k \in [0.4, 5.0]$ as $s$ varies from $0.5$ to $0.1$, covering regimes inaccessible to listed baselines. 
\end{itemize}

\section{Binary Compressive Sensing}
\label{sec2}

Let $\mathbf{x}_{org} \in \{0,1\}^{n}$ be the unknown signal with $k$ nonzero entries, $k\ll n$.
The measurement model is $\mathbf{y}_{org} = \mathbf{H} \, \mathbf{x}_{org}$, where $\mathbf{H}\in\mathbb{R}^{m\times n}$ and $\mathbf{y}_{org}\in\mathbb{R}^m$.
We focus on $m \le k$.
Applying the affine transformation $\mathbf{x} = \mathbf{1}-2 \, \mathbf{x}_{org}$ maps binary to ${\pm1}$ variables; although $\mathbf{x}$ is dense, it uniquely determines $\mathbf{x}_{org}$ \cite{mangasarian2011probability}.
Substituting into the  model gives the following:
\begin{equation}
\label{equivalentLS}
\mathbf{y} = \mathbf{H} \, \mathbf{x}  \triangleq \mathbf{H} \, \mathbf{1} - 2 \, \mathbf{y}_{org}
\end{equation}

$\mathbf{x}\in\{-1,+1\}^n$ is estimated from $\mathbf{y}$ by recovering $\mathbf{x}_{org}$ from $\mathbf{y}_{org}$  via the following and applying $\hat{\mathbf{x}}_{org}=(\mathbf{1}-\hat{\mathbf{x}})/2$:
\begin{equation}
\label{MLproblem}
\underset{\widehat{\mathbf{x}}\in\{+1,-1\}^{n}}{\text{minimize}} \quad \bigl\lVert\mathbf{y}-\mathbf{H}\,\widehat{\mathbf{x}}\bigr\rVert_{2}^{2} 
\end{equation}

For noisy measurements $\mathbf{y}_{org} = \mathbf{H}\, \mathbf{x}_{org} + \mathbf{w}$, we use $\mathbf{w}\in \mathbb{R}^m$ with components i.i.d. $\mathcal{U}[-\eta,\eta]$, $\eta = 5\times10^{-2}$. Following \cite{fosson2019recovery}, we define $\mathrm{SNR} \approx k/(m\eta^2)$ (19--30 dB for tested sparsity levels) for direct comparability. For context, optimal decoding of $\mathbf{y}_{org}=\mathbf{H}\, \mathbf{x}_{org}$ requires $m \gtrsim \log_2 \binom{n}{k}$ for stable $k$-sparse recovery; for $n=100$, $k=10$ this bound is $\approx 44$, and robust recovery at $m<44$ relies on Gaussian $\mathbf{H}$ structure.

\begin{figure}[t!]
\centering
\includegraphics[width=1\columnwidth]{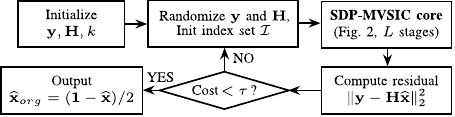}
\caption{The proposed ISDP-MVSIC algorithm. The NO branch triggers a stage-level retry with a fresh randomization, as in the per-stage attempt loop of Algorithm~1.}
\label{fig1}
\end{figure}

\begin{figure}[t!]
\centering
\includegraphics[width=1\columnwidth]{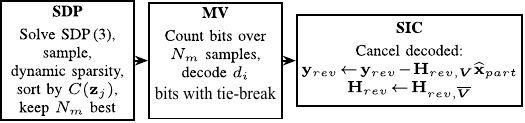}
\caption{SDP\,$\to$\,MV\,$\to$\,SIC pipeline executed at every stage~$i$.}
\label{fig2}
\end{figure}

\begin{figure}[t!]
\centering
\includegraphics[width=1\columnwidth]{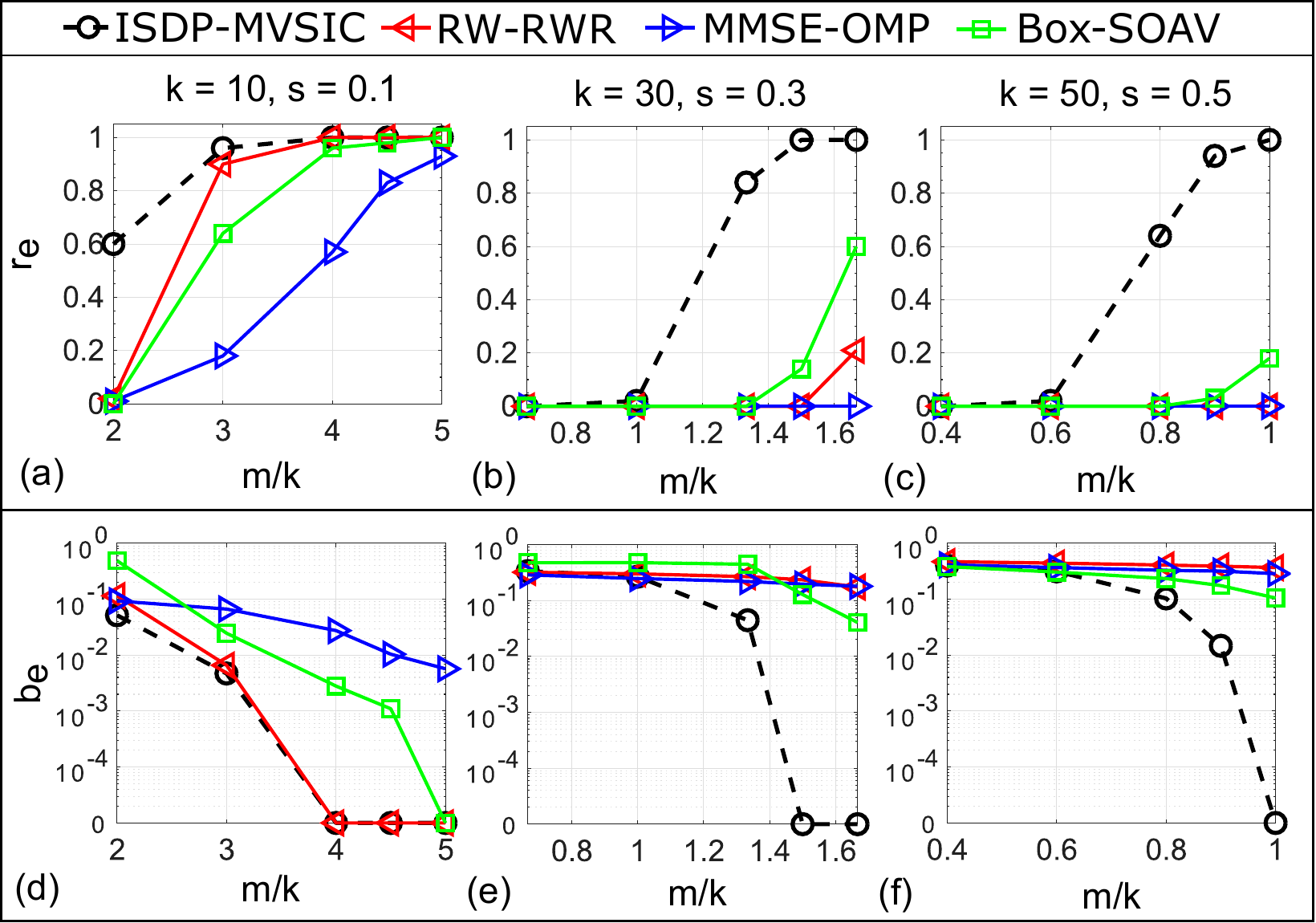}
\caption{Performance comparison (noisy, $n = 100$): $r_e$ in (a)--(c), $b_e$ in (d)--(f) vs. $m/k$ for $k = 10, 30, 50$.}
\label{fig3}
\end{figure}

\begin{figure}[t!]
\centering
\includegraphics[width=1\columnwidth]{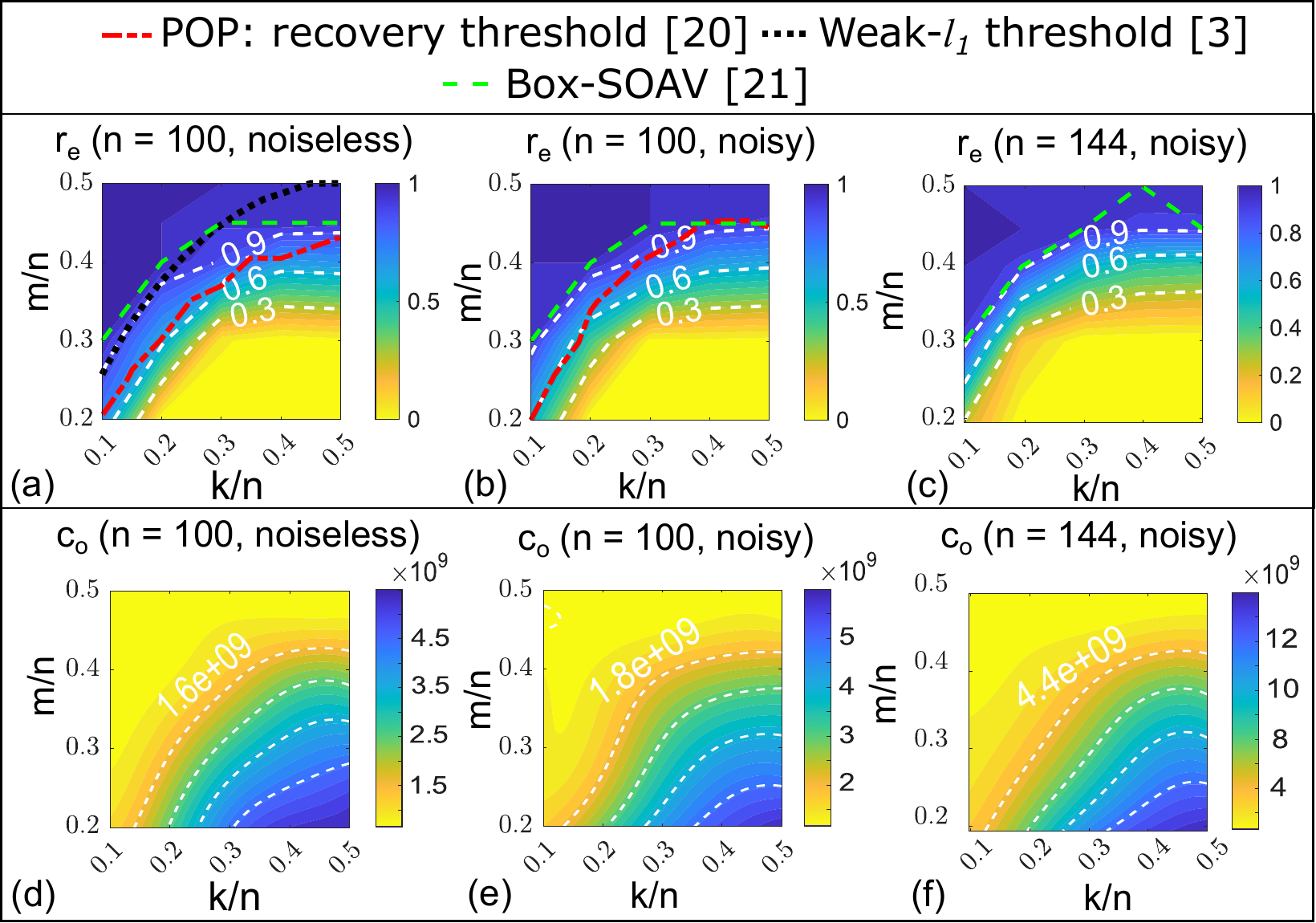}    
\caption{ISDP-MVSIC noiseless simulation results:  $r_e$ and $c_o$ maps for $n = 100$ in (a) and (d). ISDP-MVSIC noisy simulation results: $r_e$ and $c_o$ maps for  $n = 100$ in (b) and (e), and for $n = 144$ in (c) and (f). Lines: POP \cite{fosson2019recovery}, Weak-$\ell_1$ threshold \cite{stojnic2010recovery}, Box-SOAV \cite{hayakawa2020asymptotic} thresholds.}
\label{fig4}
\end{figure} 

\section{ISDP-MVSIC Algorithm}
\label{ISDP-MVSIC}

The ISDP-MVSIC algorithm (Figs.~\ref{fig1}--\ref{fig2}, Algorithm~\ref{alg:isdpmvsic}) uses randomized SDP sampling across $L$ stages with decreasing length $t$ per stage, where $d_{SIC} = [d_1 \hdots d_L]$ and $n = \sum_{i=1}^L d_i$. Unlike the QAOA-based MIMO detector of \cite{gulbahar2025majority}, the sampling here is entirely classical SDP-based and is specialized to sparse BCS by the residual-budgeted filter and the retry loop introduced below.
Before SDP, we left-multiply $\mathbf{y}_{rev}$, $\mathbf{H}_{rev}$ by a fresh random Gaussian matrix $\mathbf{G}\in\mathbb{R}^{t_i\times m}$ at each attempt; the index set $\mathcal{I}$ tracks decoded coordinates so that bits are written back to original positions. The randomization generates diverse candidate pools across the $r_i$ retries; a random tie-break at decoding (Sec.~\ref{ISDP-MVSIC}-B) further decorrelates the bit selection when MV votes saturate. Each stage includes: (1) solve SDP, (2) sample $N_s$ candidates and keep $N_m$ lowest-cost, (3) apply MV to decode $d_i$ reliable bits, (4) SIC removes interference; if the residual cost exceeds $\tau$, the pipeline restarts with a fresh randomization.

\begin{algorithm}[t!]
\caption{ISDP-MVSIC}
\label{alg:isdpmvsic}
\setstretch{0.9}
\begin{algorithmic}[1]
\Require $\mathbf{H}\in\mathbb{R}^{m\times n}$, $\mathbf{y}_{org}$, $k$, $L$, $\{d_i, r_i\}_{i=1}^{L}$, $N_s$, $\Delta$, $\tau$
\Ensure $\widehat{\mathbf{x}}_{org}\in\{0,1\}^{n}$ \textbf{or} \texttt{Failure}
\State $\mathbf{y}\gets \mathbf{H}\mathbf{1}-2\mathbf{y}_{org}$, initialize $\widehat{\mathbf{x}} \gets \mathbf{0}$, $k_{dec} \gets 0$, $\mathbf{y}_{rev}\gets\mathbf{y}$, $\mathbf{H}_{rev}\gets\mathbf{H}$, $\mathcal{I}\gets 1:n$
\For{$i=1:L$} 
  \State $stage\_success \gets \text{false}$
  \For{$attempt = 1:r_i$} 
    \State Draw random Gaussian matrix $\mathbf{G}\in\mathbb{R}^{t_i\times m}$
    \State $\mathbf{H}_{rand} \gets \mathbf{G}\mathbf{H}_{rev}$, $\mathbf{y}_{rand} \gets \mathbf{G}\mathbf{y}_{rev}$
    \State Solve SDP \eqref{SDPrelax} on $(\mathbf{H}_{rand},\mathbf{y}_{rand})$ to obtain $\widehat{\mathbf{S}}$
    \State Draw $N_s$ samples $\tilde{\mathbf{z}}_j\sim\mathcal{N}(\mathbf{0},\widehat{\mathbf{S}})$, take signs $\mathbf{z}_j$
    \State Discard $\mathbf{z}_j$ if count of $-1$ values $> (k - k_{dec})\Delta$; keep the $N_m=d_i+1$ lowest cost
    \State Decode $d_i$ reliable bits via MV with random tie-break to form $\widehat{\mathbf{x}}_{part}$
    \State $\mathbf{y}_{test} \gets \mathbf{y}_{rev} - \mathbf{H}_{rev}(:, \boldsymbol{V})\widehat{\mathbf{x}}_{part}$
    \If{$\|\mathbf{y}_{test}\|_2^2 < \tau$}
      \State $\widehat{\mathbf{x}}_{\mathcal{I}(\boldsymbol{V})} \gets \widehat{\mathbf{x}}_{part}$,\ \, $k_{dec} \gets k_{dec} + (\text{count of } -1\text{ values})$
      \State SIC: $\mathbf{y}_{rev} \gets \mathbf{y}_{test}$, 
      \State $\mathbf{H}_{rev} \gets \mathbf{H}_{rev}(:, \overline{\boldsymbol{V}})$, $\mathcal{I} \gets \mathcal{I}(\overline{\boldsymbol{V}})$
      \State $stage\_success \gets \text{true}$ \textbf{and break}
    \EndIf
  \EndFor
  \If{\textbf{not} $stage\_success$} \Return \texttt{Failure} \EndIf
\EndFor
\State \Return $\widehat{\mathbf{x}}_{org}\gets (\mathbf{1}-\widehat{\mathbf{x}})/2$
\end{algorithmic}
\end{algorithm}

\subsection{Sampling with Randomized SDP}

SDP relaxation has complexity $\mathcal{O}(t_i^{3.5})$  \cite{luo2010sdp, fukuda2019linear}:
\begin{equation}
\label{SDPrelax}
\min_{\mathbf{S}} \operatorname{tr}(\mathbf{Q  \, S })  
 \text { s.t. } \mathbf{ S } \succeq \mathbf{0}, \, S_{\ell, \ell} = 1 \,\, \forall \ell \in [1, t_i+1]
\end{equation}
where $\mathbf{Q}=[\mathbf{H}_{rand}^{T} \mathbf{H}_{rand}, -\mathbf{H}_{rand}^{T} \mathbf{y}_{rand}; -\mathbf{y}_{rand}^{T} \mathbf{H}_{rand}, 0]$.
Samples from $\mathcal{N}(\mathbf{0}, \mathbf{\widehat{S}})$ are quantized after dropping the anchor: $\hat{x}_{j,\ell} = \operatorname{sign}(z_{j,\ell})$ \cite{gulbahar2025majority}.
Dynamic sparsity filtering discards samples with $|\{z_{j,\ell}=-1\}| > (k-k_{dec})\Delta$ where $\Delta=1.1$.

\subsection{Majority Voting}

MV decodes $d_i$ bits from the $N_m=d_i+1$ lowest-cost samples at stage $i$.
Let $\mathbf{z}_j$ denote the $j$-th sample for $j \in [1, N_s]$.
Costs $C(\mathbf{z}_j)$ are sorted, keeping $N_m$ best.
Count $|c_\ell| = |\sum_{j=1}^{N_m} z_{M_j,\ell}|$ is evaluated, sorted descending, and top $d_i$ indices form $\boldsymbol{V}$; when many bits tie at the maximum vote, the order is shuffled before selection so retries decode different bits.
Decoded bits: $\widehat{x}_{part,j} = \operatorname{sign}(c_{V_j})$ for $j \in [1, d_i]$.
Index set $\mathcal{I}(\boldsymbol{V})$ maps to original coordinates; remainder $\mathcal{I}(\overline{\boldsymbol{V}})$ for future stages.

\subsection{Successive Interference Cancellation}
SIC removes decoded columns 
after $\widehat{\mathbf{x}}_{part}$ at indices $\boldsymbol{V}$:
\begin{equation}
\label{eqrev}
\mathbf{y}_{rev} \gets \mathbf{y}_{rev} - \mathbf{H}_{rev, \boldsymbol{V}}  \, \mathbf{\widehat{x}}_{part}; \,\,
 \mathbf{H}_{rev} \gets \mathbf{H}_{rev, \overline{\boldsymbol{V}}} ; \,\, 
\mathcal{I}  \gets \mathcal{I}(\overline{\boldsymbol{V}})
\end{equation}

\section{Computational Complexity}
\label{sec4}

\begin{table}[!t]
\caption{Single-stage complexity of ISDP-MVSIC ($m \times t_i$)}
\label{Table_complex2}
\centering
\scriptsize
\setlength{\tabcolsep}{2.2pt}
\renewcommand{\arraystretch}{0.98}
\resizebox{\columnwidth}{!}{%
\begin{tabular}{|c|c|c|c|}
\hline
\textbf{Task} & \textbf{Complexity} & \textbf{Task} & \textbf{Complexity} \\
\hline
SDP relaxation & $\mathcal{O}(t_i^{3.5})$ & Sorting costs & $\mathcal{O}(N_s\log(N_s))$ \\
\hline
\makecell{Sampling and\\cost calculation} & $\mathcal{O}(N_s t_i^2)$ & MV counting & $\mathcal{O}(N_m t_i)$ \\
\hline
\end{tabular}%
}
\end{table}

The complexity of a single attempt at stage $i$ is dominated by the SDP relaxation $\mathcal{O}(t_i^{3.5})$ and the Gaussian sampling with cost evaluation $\mathcal{O}(N_s t_i^2)$ (Table~\ref{Table_complex2}). Executing the maximum $r_i$ attempts at every stage bounds the total complexity by
\begin{equation}
\mathcal{C}_{max} = \sum_{i=1}^{L} r_i \cdot \Big[ t_i^{3.5} + N_s t_i^2 \Big]
\label{eq:complexity_max}
\end{equation}
Since $t_i \le n$, this is polynomial, $\mathcal{O}(n^{3.5}\sum_i r_i)$, separating the SDP cost from exponential tree search. The empirical cost is lower since a stage exits on its first successful retry, and the pipeline terminates if any stage fails the threshold $\tau$. Let $p_{fail,i}$ be the probability that one attempt at stage $i$ fails $\tau$. A shortened geometric law conditioned on previous-stage success gives the expected number of attempts and complexity as follows:
\begin{equation}
\mathbb{E}[\mathcal{C}_{total}] = \sum_{i=1}^{L} \mathbb{E}[A_i]  \,  \Big[ t_i^{3.5} + N_s t_i^2 \Big]. \qquad \qquad \qquad
\label{eq:complexity_expected}
\end{equation}
\begin{equation}
\mathbb{E}[A_i] = \left( \prod_{j=1}^{i-1} \left[1 - (p_{fail, j})^{r_j}\right] \right)   \frac{1 - (p_{fail, i})^{r_i}}{1 - p_{fail, i}}
\label{eq:expected_attempts}
\end{equation}

In recoverable regimes, early-stage $p_{fail,i}$ stays small, limiting inflation.

\section{Finite-Size Theoretical Model}
\label{sec:theory}

We give a compact conditional model for ISDP-MVSIC: not an unconditional SDP guarantee for $m<k$, but a quantification of how residual-cost ranking, MV, SIC and the design vectors $\mathbf d,\mathbf r$ shape recovery given the per-stage margins measured in simulation. At stage $i$, define
\begin{equation}
D_i=\sum_{j<i}d_j,\quad t_i=n-D_i,\quad \alpha_i=m/t_i,
\label{eq:stage_state_theory}
\end{equation}
where $t_i$ is the number of not-yet-decoded coordinates and $\alpha_i$ is the effective measurement ratio. The stage observation is
\begin{equation}
\mathbf y_i=\mathbf H_i\mathbf x_i+\boldsymbol\varepsilon_i,\qquad
m^{-1}\mathbb E\|\boldsymbol\varepsilon_i\|_2^2=\nu_i ,
\label{eq:stage_obs_theory}
\end{equation}
where $\boldsymbol\varepsilon_i$ collects measurement noise and any interference leaked from earlier SIC errors. Thus $t_i$ decreases and $\alpha_i$ increases across SIC stages, provided previous decisions are correct, so later stages are effectively better conditioned.

\begin{lemma}[Cost separation]
\label{lem:cost_sep_short}
Consider the first stage ($t_1=n$, before any cancellation), with $H_{ab}\sim\mathcal N(0,1)$ and noise of per-coordinate energy $\nu_i$ independent of $\mathbf H_i$. For a candidate $\mathbf z\in\{-1,+1\}^{t_i}$ at Hamming distance $h=d_H(\mathbf z,\mathbf x_i)$,
\begin{equation}
C_i(\mathbf z)=\|\mathbf y_i-\mathbf H_i\mathbf z\|_2^2\overset{d}{=}(4h+\nu_i)\,\chi_m^2 .
\label{eq:cost_chi_short}
\end{equation}
Consequently, for $0<\epsilon<1$,
\begin{equation}
\Pr\!\left\{\left|\frac{C_i(\mathbf z)}{m(4h+\nu_i)}-1\right|\ge\epsilon\right\}
\le 2e^{-m\epsilon^2/8} .
\label{eq:chisq_conc_short}
\end{equation}
\end{lemma}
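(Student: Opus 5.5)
The plan is to write the residual row by row and show each entry is an independent centered Gaussian with a common variance. Set $\mathbf e=\mathbf x_i-\mathbf z\in\{0,\pm2\}^{t_i}$. It has exactly $h$ nonzero entries, each equal to $\pm2$, so $\|\mathbf e\|_2^2=4h$. Then
\[
\mathbf y_i-\mathbf H_i\mathbf z=\mathbf H_i\mathbf e+\boldsymbol\varepsilon_i ,
\]
and the $a$-th coordinate is $\sum_b H_{ab}e_b+\varepsilon_{i,a}$.

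At the first stage $\mathbf H_i=\mathbf H$ has i.i.d. $\mathcal N(0,1)$ entries. Also $\mathbf z$ and $\mathbf x_i$ are fixed with respect to $\mathbf H$, since a candidate is considered as a fixed point of the hypercube. Hence $\sum_b H_{ab}e_b\sim\mathcal N(0,4h)$, independently across rows $a$.

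For the noise term, I would model $\boldsymbol\varepsilon_i$ as i.i.d. $\mathcal N(0,\nu_i)$ independent of $\mathbf H$. This Gaussian proxy is what makes the distributional identity exact. Each coordinate is then $\mathcal N(0,4h+\nu_i)$, independent across the $m$ rows. Summing the squares gives
\[
C_i(\mathbf z)=(4h+\nu_i)\sum_{a=1}^m g_a^2 ,\qquad g_a \text{ i.i.d. } \mathcal N(0,1),
\]
which is \eqref{eq:cost_chi_short}.

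For the actual uniform noise $\mathcal U[-\eta,\eta]$ the equality is only approximate, and I expect this to be the main obstacle. I would state it honestly. Conditional on the noise, $C_i$ is exactly a scaled noncentral $\chi^2_m$ with scale $4h$ and noncentrality $\|\boldsymbol\varepsilon_i\|^2/(4h)$. Its mean is $m(4h+\nu_i)$, matching the claim. Its fluctuations are those of $(4h+\nu_i)\chi^2_m$ up to $O(\nu_i/h)$ corrections, and $\nu_i\ll 4$ in the tested SNR range. So I would present the Gaussian-noise case as the rigorous statement and the uniform-noise case as a first-order approximation. The same reasoning covers the residual SIC leakage folded into $\boldsymbol\varepsilon_i$.

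For \eqref{eq:chisq_conc_short}, I would invoke the Laurent--Massart bounds for $X\sim\chi^2_m$:
\[
\Pr\{X-m\ge 2\sqrt{mx}+2x\}\le e^{-x},\qquad \Pr\{m-X\ge 2\sqrt{mx}\}\le e^{-x}.
\]
Take $x=m\epsilon^2/8$. Then $2\sqrt{mx}+2x=m(\epsilon/\sqrt2+\epsilon^2/4)\le m\epsilon$ for $\epsilon<1$, since $1/\sqrt2+1/4<1$. Likewise $2\sqrt{mx}=m\epsilon/\sqrt2\le m\epsilon$. Each tail is therefore at most $e^{-m\epsilon^2/8}$.

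A union bound then gives $\Pr\{|X/m-1|\ge\epsilon\}\le 2e^{-m\epsilon^2/8}$. Applying this to $X=C_i(\mathbf z)/(4h+\nu_i)$ finishes the proof. Alternatively, the standard sub-exponential Bernstein bound with constant $1/8$ would serve for the concentration step.
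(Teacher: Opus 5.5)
Your proposal is correct and follows essentially the paper's argument: write the residual as $\mathbf H_i(\mathbf x_i-\mathbf z)+\boldsymbol\varepsilon_i$, observe that each row is a centered Gaussian of variance $4h+\nu_i$, independent across rows, and apply standard $\chi^2$ concentration. You are more explicit than the paper on points it leaves tacit. The exact identity needs Gaussian noise (the paper's ``summing $m$ squared independent Gaussians'' silently assumes it), and $\mathbf z$ must be treated as fixed with respect to $\mathbf H$. You also show that the Laurent--Massart choice $x=m\epsilon^2/8$ gives the constant $2e^{-m\epsilon^2/8}$ for $0<\epsilon<1$.
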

\begin{proof}
Each row $[\mathbf H_i(\mathbf x_i-\mathbf z)]_a$ sums $h$ nonzero terms $\pm2H_{a\ell}$, hence is Gaussian with variance $4h$; adding the independent perturbation gives variance $4h+\nu_i$, and summing $m$ squared independent Gaussians yields \eqref{eq:cost_chi_short}. The bound \eqref{eq:chisq_conc_short} is the standard chi-square concentration inequality.
\end{proof}

The decoupling is exact at the first stage and approximate later, where SIC and column removal introduce mild dependence; Fig.~\ref{fig5}(a) tests \eqref{eq:cost_chi_short} on a real Gaussian $\mathbf H$ with close agreement. Lemma~\ref{lem:cost_sep_short} justifies cost-sorting: lower-error samples have smaller mean cost, and cost reversals decay exponentially in $m$, so larger $\alpha_i$ improves the accepted pool.
 
Let $\{\mathbf Z_{i,j}^{(r_i)}\}_{j=1}^{N_m}$ denote the accepted low-cost samples at stage $i$ for MV. For the decoded index $\ell\in\boldsymbol V_i$,  sample correctness probability and the worst selected-bit margin are
\begin{equation}
q_{i\ell}^{(r_i)}
=\Pr\{Z_{i,j,\ell}^{(r_i)}=x_{i,\ell}\mid \mathrm{accept}\},\, 
\gamma_i=\min_{\ell\in\boldsymbol V_i}\bigl(2 \,q_{i\ell}^{(r_i)}-1\bigr).
\label{eq:margin_short}
\end{equation}
where we do not model $\gamma_i$ in closed form; it is measured from the accepted-sample vote statistics at each stage. The margin is determined by the weakest decoded bit, since a single incorrect sign fails the stage. If the $N_m r_i$ accepted votes were independent, the effective vote count would be $N_m r_i$; positive correlation between samples taken from the same SDP solution reduces it to an effective count $N_{{\rm eff},i}^{(r_i)}\le N_m r_i$.
\begin{proposition}[MV stage error]
\label{prop:mv_short}
If $\gamma_i>0$ and accepted votes fit Hoeffding-type concentration with effective sample size $N_{{\rm eff},i}^{(r_i)}$, then the probability of missolving at least one $d_i$ bit of the $i$-th satisfies
\begin{equation}
P_i^{(r_i)}\le d_i\exp\!\left(-\tfrac{1}{2}N_{{\rm eff},i}^{(r_i)}\gamma_i^2\right).
\label{eq:stage_fail_short}
\end{equation}
\end{proposition}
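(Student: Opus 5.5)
The plan is a per-bit Hoeffding bound followed by a union bound over the $d_i$ decoded indices. Fix a decoded index $\ell\in\boldsymbol V_i$. Encode each accepted vote on coordinate $\ell$ as
\begin{equation*}
U_{j}=Z_{i,j,\ell}^{(r_i)}x_{i,\ell}\in\{-1,+1\},
\end{equation*}
so that $U_j=+1$ exactly when the vote is correct. By \eqref{eq:margin_short},
\begin{equation*}
\mathbb E[U_j\mid\mathrm{accept}]=2q_{i\ell}^{(r_i)}-1\ge\gamma_i>0.
\end{equation*}
The MV decision $\operatorname{sign}(c_\ell)$ is wrong only if the aggregated vote has the wrong sign. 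Under the effective-sample model, this is the event that the normalized sum
\begin{equation*}
\bar U=\frac{1}{N}\sum_{j=1}^{N}U_j,\qquad N=N_{{\rm eff},i}^{(r_i)},
\end{equation*}
satisfies $\bar U\le 0$; ties resolved against us are included by using $\le$. This event is contained in $\bar U-\mathbb E\bar U\le-\gamma_i$.

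Next apply the hypothesis that the accepted votes satisfy Hoeffding-type concentration with effective size $N$. For variables bounded in $[-1,1]$ (range $2$), Hoeffding gives
\begin{equation*}
\Pr\{\bar U-\mathbb E\bar U\le-\gamma_i\}\le\exp\!\left(-\frac{2N^2\gamma_i^2}{N\cdot 2^2}\right)=\exp\!\left(-\tfrac12 N\gamma_i^2\right).
\end{equation*}
This is exactly the per-bit exponent in \eqref{eq:stage_fail_short}. Because $\gamma_i$ is the minimum margin over $\boldsymbol V_i$, the same bound holds uniformly for every $\ell\in\boldsymbol V_i$.

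The stage is missolved if at least one of the $d_i$ selected bits is wrong. A union bound over $\ell\in\boldsymbol V_i$ then gives $P_i^{(r_i)}\le d_i\exp(-\tfrac12 N_{{\rm eff},i}^{(r_i)}\gamma_i^2)$. No independence across bits is needed, which is why the union bound suits the correlated SDP samples.

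The main obstacle is justifying the concentration step. The votes are dependent: they come from the same SDP covariance $\widehat{\mathbf S}$, they are selected by cost ranking, and $\boldsymbol V_i$ is chosen data-dependently. For this reason I would not claim an unconditional result. I would state that $N_{{\rm eff},i}^{(r_i)}$ is defined precisely so that the Hoeffding tail holds, as the proposition assumes. Independence across the $r_i$ fresh randomizations $\mathbf G$ makes blocks from different attempts independent, which supports an effective count between $N_m$ and $N_m r_i$. Conditioning on $\boldsymbol V_i$ is absorbed by taking $q_{i\ell}^{(r_i)}$ conditional on acceptance and selection, as in \eqref{eq:margin_short}.
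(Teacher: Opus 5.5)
Your proposal is correct and matches the paper's argument, which is a per-bit Hoeffding bound on the $\pm1$ votes followed by a union bound over the $d_i$ selected indices. The vote range of $2$ together with the minimum margin $\gamma_i$ gives exactly the exponent $\tfrac12 N_{{\rm eff},i}^{(r_i)}\gamma_i^2$, and you correctly treat the concentration step as the stated hypothesis rather than something derived for the dependent SDP samples.
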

The proof is a Hoeffding bound per selected bit and a union bound over the $d_i$ decisions. Qualitatively, stage reliability improves with the margin $\gamma_i$ and the effective vote count, and degrades as more bits $d_i$ are decoded at once, because the weakest selected margin then tends to drop.

\begin{theorem}[]
\label{thm:pipeline_short}
Conditioned on correct previous SIC decisions and positive margins at all stages, the following holds while the second inequality follows from Weierstrass product bound:
\begin{equation}
P_{\rm ex}\ge\prod_{i=1}^{L}(1-P_i^{(r_i)})\ge1-\sum_{i=1}^{L}P_i^{(r_i)} ,
\label{eq:pipe_bound_short}
\end{equation}
\end{theorem}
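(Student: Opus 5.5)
The plan is to treat exact recovery as the intersection of $L$ stage events and apply the chain rule of probability. Let $E_i$ be the event that all $d_i$ bits decoded at stage $i$ (on the accepted attempt) equal the corresponding entries of $\mathbf x_i$. Exact recovery satisfies $\{\text{exact}\}\supseteq E_1\cap\cdots\cap E_L$. The reason is that the decoded index sets $\boldsymbol V_i$ partition $[1,n]$ through $\mathcal I$, since $n=\sum_i d_i$, so correctness at every stage writes the correct sign into every coordinate. This gives
\begin{equation*}
P_{\rm ex}\ge\Pr\Big\{\bigcap_{i=1}^L E_i\Big\}=\prod_{i=1}^{L}\Pr\{E_i\mid E_1,\dots,E_{i-1}\}.
\end{equation*}

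The conditional factors are handled next. Conditioned on $E_1,\dots,E_{i-1}$, the SIC update \eqref{eqrev} removes the interference exactly. Hence $\mathbf y_{rev}$ at stage $i$ is precisely the stage model \eqref{eq:stage_obs_theory} with $\mathbf x_i$ equal to the true remaining bits and $\boldsymbol\varepsilon_i$ containing only measurement noise. This is exactly the hypothesis ``correct previous SIC decisions'' under which $\gamma_i$ and $N_{{\rm eff},i}^{(r_i)}$ are defined. Since $\gamma_i>0$ is assumed, Proposition~\ref{prop:mv_short} applies and gives $\Pr\{E_i^c\mid E_1,\dots,E_{i-1}\}\le P_i^{(r_i)}$. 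Therefore each factor is at least $1-P_i^{(r_i)}$, which yields the first inequality. Here $P_i^{(r_i)}$ is read as the conditional stage-failure probability, i.e.\ the quantity the proposition bounds. If some $P_i^{(r_i)}>1$, the bound is trivial after truncation at $1$, so I may take $P_i^{(r_i)}\in[0,1]$.

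For the second inequality, I would prove the Weierstrass product inequality $\prod_{i=1}^L(1-a_i)\ge 1-\sum_{i=1}^L a_i$ for $a_i\in[0,1]$ by induction on $L$. Write $\Pi_L$ for the left side. The case $L=1$ is equality. For the inductive step,
\begin{equation*}
\Pi_{L}=\Pi_{L-1}(1-a_L)\ge\Big(1-\sum_{i<L}a_i\Big)(1-a_L)\ge 1-\sum_{i\le L}a_i,
\end{equation*}
which uses $\Pi_{L-1}\ge0$ and $a_L\sum_{i<L}a_i\ge0$. If $1-\sum_{i<L}a_i<0$, the claim is immediate because $\Pi_L\ge0$.

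The main obstacle is the conditioning step. The acceptance test $\|\mathbf y_{test}\|_2^2<\tau$ and the retry loop mean that the stage outcome actually used is a selected attempt, not a single MV draw. I would handle this by arguing that $q_{i\ell}^{(r_i)}$ and $\gamma_i$ are defined in \eqref{eq:margin_short} conditional on acceptance and after $r_i$ attempts. Consequently $P_i^{(r_i)}$ already refers to the error probability of the accepted decision, and no further union over attempts is needed. I would state explicitly that the theorem inherits this modeling assumption from Proposition~\ref{prop:mv_short} rather than re-deriving it.
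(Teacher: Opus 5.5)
Your proposal is correct and follows the paper's route: the paper gets the first inequality by multiplying per-stage success probabilities conditioned on correct earlier SIC decisions, which is exactly your chain-rule factorization over $E_1,\dots,E_L$, and the second inequality from the Weierstrass product bound; you simply make both steps explicit. One small attribution slip: the multiplication step in your induction relies on $1-a_L\ge 0$, which you have from $a_i\in[0,1]$, rather than on $\Pi_{L-1}\ge 0$.
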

Theorem~\ref{thm:pipeline_short} assumes error-free SIC. When a stage errs, it perturbs the next residual, which couples the per-stage errors: a wrong sign injects an extra $\pm2\mathbf h_\ell$ term per misdecoded bit, and since stage $i$ decodes $d_i$ bits with error probability $P_i^{(r_i)}$, the expected number of wrong bits is $d_iP_i^{(r_i)}$ and the leaked energy accumulates as
\begin{equation}
\mathbb E[\nu_{i+1}]\le \mathbb E[\nu_i]+4d_iP_i^{(r_i)} .
\label{eq:sic_nu_short}
\end{equation}
Equations \eqref{eq:stage_fail_short}--\eqref{eq:sic_nu_short} make explicit the trade-off that governs the schedule, which we summarize as
\begin{equation}
\min_{\mathbf d,\mathbf r}\ \mathcal C_{\rm pipe}(\mathbf d,\mathbf r)
\quad\mathrm{s.t.}\quad \sum_iP_i^{(r_i)}\le\delta,\quad \sum_i d_i=n,
\label{eq:design_short}
\end{equation}
with $\mathcal{C}_{\rm pipe}$ from \eqref{eq:complexity_expected}. We do not solve \eqref{eq:design_short} in closed form; it formalizes why aggressive $d_i$ lowers later dimensions but raises stage risk, while larger $r_i$ improves reliability at near-linear cost. The schedules $\mathbf{d}_{\mathrm{SIC}},\mathbf{r}_{\mathrm{SDP}}$ are set empirically, but the margin collapse measured in Fig.~\ref{fig5}(b) supports the chosen profile: more repetition in the early stages, where the margins are smallest, and fewer attempts once they saturate. The model therefore guides schedule design instead of only describing it.

\begin{figure}[!t]
\centering
\includegraphics[width=1\columnwidth]{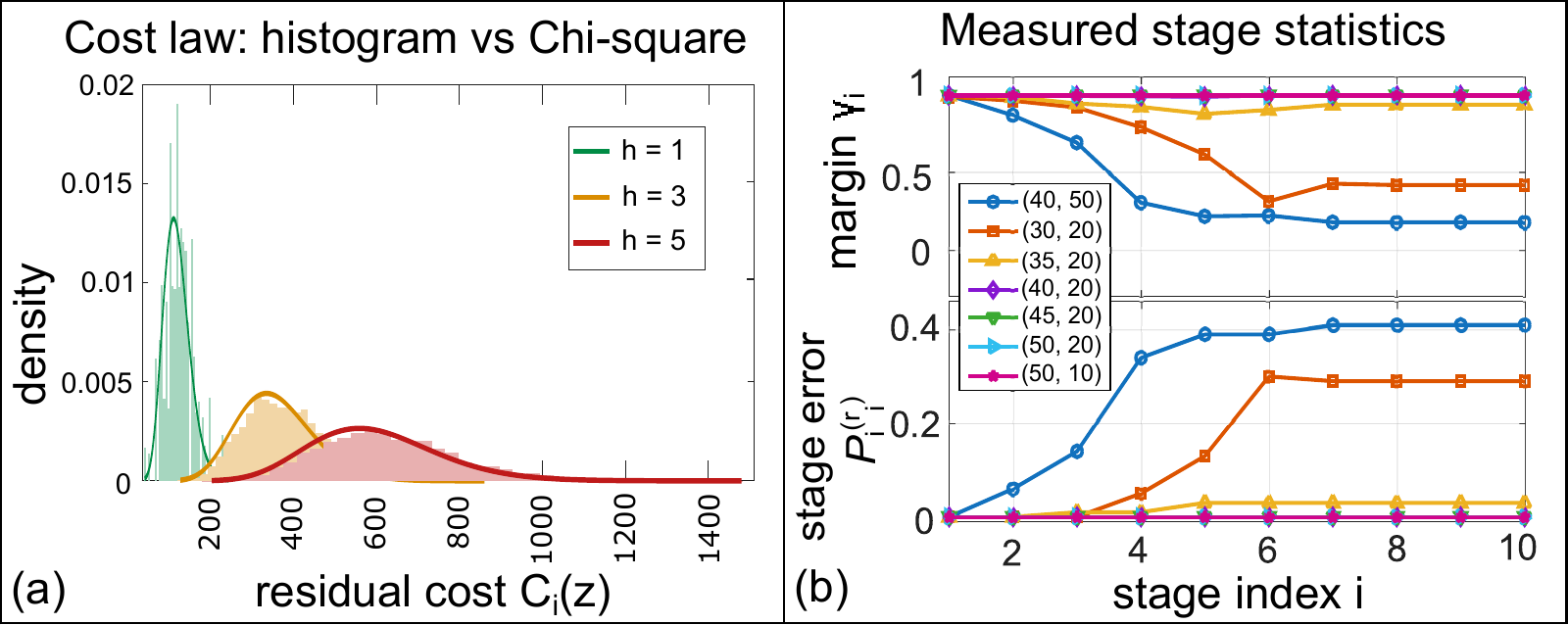}
\caption{Empirical validation on a real Gaussian $\mathbf H$: (a) cost histograms for fixed $h$ match the predicted $(4h+\nu_i)\chi_m^2$ law of Lemma~\ref{lem:cost_sep_short}, with mean cost linear in $h$; (b) measured margin $\gamma_i$ and stage error $P_i^{(r_i)}$ versus stage for $(m,k)\in\{(50,10),(50,20),(45,20),(40,20),(35,20),(30,20),\allowbreak(40,50)\}$, $n=100$.}
\label{fig5}
\end{figure}

Fig.~\ref{fig5} validates the model: (a) mean cost is linear in $h$, and (b) $\gamma_i$ stays near one with near-zero stage error for $m/k\gtrsim1.75$ but collapses over the SIC stages at smaller $m/k$ (sharply for $(30,20)$ and the undersampled $(40,50)$), as predicted by Proposition~\ref{prop:mv_short} and \eqref{eq:sic_nu_short}. The strongly correlated decoded-bit votes ($N_{{\rm eff},i}^{(r_i)}\approx1$) make \eqref{eq:stage_fail_short} diagnostic, not tight.

\section{Numerical Simulations}
\label{sec6}

We use $\mathbf d_{\mathrm{SIC}}=[10^{\times10}]$ and $[10^{\times14},4]$ for $n=100$ and $144$, with $\mathbf r_{\mathrm{SDP}}=[20,10^{\times5},2^{\times4}]$ and $[20,10^{\times5},2^{\times9}]$ ($a^{\times q}$ = $q$ repetitions of $a$), giving worst-case complexity $\mathcal{C}_{max} = 7.9\times10^9, 2.0\times10^{10}$ (Eq.~\eqref{eq:complexity_max}). We compare $r_e$ and BER $b_e$ for ISDP-MVSIC, Box-SOAV \cite{hayakawa2020asymptotic}, RWR \cite{Fosson2018}, and MMSE-OMP \cite{sparrer2016mmse} under noisy measurements; Box-SOAV uses $\mathbf{A} \sim \mathcal{N}(0, 1/n)$ and ISDP-MVSIC uses $\mathbf{H} \sim \mathcal{N}(0, 1)$ (both standard). Parameters: $M=100$ trials, $N_s=16384$, $N_m=d_i+1$ (the smallest pool that accepts a strict majority over the $d_i$ decoded bits), $\tau=10^{-4}$/$11$ (noiseless/noisy), max retries $20$, $s\in[0.1,0.5]$, $m/n\in[0.2,0.5]$, with $r_e=M_t/M$, $b_e=\sum e_i/(Mn)$; contours (95\% CI $\pm 0.10$ at $r_e = 0.5$) are empirical, not asymptotic. A complete implementation reproducing all figures is publicly available as a Code Ocean compute capsule \cite{computecapsule}.

Fig.~\ref{fig3} compares noisy recovery at $n=100$ for $k=10,30,50$: ISDP-MVSIC reaches high $r_e$ (top row) and lower BER (bottom row) at smaller $m/k$ than the baselines, most visibly in the severe regime $m/k\le1.5$ where they fail or accept large BER. Fig.~\ref{fig4} gives the $(k/n,m/n)$ recovery and complexity maps; in the noiseless panel (a) the red dash-dot, black dotted, and green dashed curves are the POP \cite{fosson2019recovery}, Weak-$\ell_1$ threshold \cite{stojnic2010recovery}, and Box-SOAV \cite{hayakawa2020asymptotic} thresholds. The complexity maps report the realized cost $c_o$, which stays well below the worst case $\mathcal{C}_{max}$ (e.g.\ $\sim\!7\times10^8$ vs $7.9\times10^9$ at $n=100$) because successful stages exit their retry loop early. ISDP-MVSIC recovers below all three thresholds, including $m<k$, at the higher polynomial cost of Table~\ref{tab:related}. We emphasize that this is a feasibility comparison rather than a runtime one: for $m<k$ the convex and greedy baselines fail to recover reliably, so the higher cost of ISDP-MVSIC is the trade-off for recovering where they cannot. The noisy recovery maps (b),(c) shrink as expected but stay below the POP and Box-SOAV curves over much of the grid, so the gain persists under bounded noise. Although we report $n=100,144$, behavior is governed by $k/n$ and $m/n$: maps stay similar at matched ratios while complexity scales with $n$ via \eqref{eq:complexity_max}.

We test a single benign noise level ($\eta=5\times10^{-2}$, $\approx$19--30~dB); at substantially lower SNR the cost separation $4h+\nu_i$ of Lemma~\ref{lem:cost_sep_short} degrades, so the recovery region is expected to contract, and a full SNR sweep is left to future work.

\section{Conclusion}
\label{sec7}

ISDP-MVSIC offers a tunable complexity--performance trade-off for BCS at $m < k$, combining SDP sampling, MV, SIC, and retry. Future work includes automated $r_{\text{SDP}}$, $d_{\text{SIC}}$ optimization, budget-matched comparisons with alternative samplers (QAOA, annealing), component ablations, and theoretical recovery guarantees.

\end{document}